\documentclass[12pt]{article}
\usepackage{graphicx}
\usepackage{epsfig}
\usepackage{amsmath}
\usepackage{amsthm}
\usepackage{amssymb}

\usepackage{adjustbox}
\usepackage{color}
\usepackage{natbib}

\UseRawInputEncoding

\newcommand{\bm}[1]{\mbox{\boldmath $#1$}}

\newtheorem{lemma}{Lemma}
\newcommand{\n}[1] {\mbox{\boldmath{$#1$}}}

\title{Objective Model Prior Probabilities in Variable Selection}

\author{ James Berger, Duke University and Texas A\&M University \\
Gonzalo Garc\'ia-Donato, University of Castilla-La Mancha \\
Elias Moreno, Royal Academy of Sciences, Spain \\
Luis Pericchi, University of Puerto Rico, Rio Piedras}

\date{\today}
\begin{document}
\maketitle

\begin{abstract}
For many years it was routine to use equal model prior probabilities in Bayesian
model uncertainty analysis. At least twenty years ago it became clear that this was problematic,
leading to support of much too large models in the increasingly huge model spaces
being considered in genomics and other fields. A popular replacement was to adopt
a suggestion of Harold Jeffreys for the variable selection problem in which a total
of $k$ possible variables are being considered for inclusion in the model: give the collection of all models containing $d$ variables
($d=0,\ldots,k$) prior probability $1/(k+1)$ and then divide this prior probability
equally among the models in the collection. Many other choices of model
prior probabilities that impose severe parsimony have also been introduced. We begin
by reviewing the problems with using equal model prior probabilities and then discuss some serious problems with the Jeffreys choice. Finally, we introduce
and study a number of objective alternative choices of model prior probabilities, from both numerical and theoretical
perspectives.
\end{abstract}

\section{Introduction and motivation}
Consider the variable selection problem, in which variables $(x_1, x_2, \ldots, x_k)$ can or cannot be included in a model. Let $\gamma_i = 1$ if variable $x_i$ is
to be included in the model and $\gamma_i = 0$ otherwise. Then $\gamma =(\gamma_1,
\ldots \gamma_k)$ is the vector of zeroes and ones that defines which variables are
in the model $M_{  \gamma}$, with $M_{ \bm 0}$ being the null model and $M_{\bm 1}$
being the full model. Note that there are $2^k$ possible models and that the dimension
of $M_{ \gamma}$ is $|\gamma|= \sum \gamma_i$. We only consider prior specifications that depend on the dimensionality of $\gamma$, that is, $Pr(M_\gamma)$ is a function of $|\gamma|$.

Historically, the first two considered objective choices of model prior probabilities $
Pr(M_{ \gamma}) $ were
\begin{itemize}
\item {\em Uniform} probabilities: $Pr(M_{ \gamma})  = 2^{-k}$.
\item  {\em Jeffreys} probabilities:
\end{itemize}
\begin{equation} \label{General} Pr(M_{\bm \gamma}) 
	= \int_0^1 p^{|\gamma|}
	(1-p)^{k-|\gamma|}\, \pi(p)dp,\,\,\,\,\, \pi(p)=\mbox{Un}(p\mid 0,1)
\end{equation}
leading to
\begin{equation} \label{Jeff} Pr(M_{\bm \gamma}) 
=
 Beta(1+|\gamma|,1+k-|\gamma|)= \frac{|\gamma|! \ (k-|\gamma|)!}{(k+1)!} \,. 
\end{equation}
One way in which the Jeffreys probabilities can be explained is to think of $p$ as the prior inclusion probability of a variable (assumed to be common across variables); that the inclusions of different variables are assumed to be independent; and that $p$ is
assigned the natural objective uniform prior density. These probabilities  were recommended by \cite{Jeff:61}, although
the argument therein was to give each potential model size an equal probability of $1/(k+1)$, and then divide this probability up equally among all models of a given size;
it is easy to show that this gives the same answer as (\ref{Jeff}).

The uniform probabilities also arise from saying that the prior inclusion
probability for each variable is $p=0.5$ and that they are independent. Thus
there is no penalty for including variables, which can be problematical, particularly if $k$ is large. For instance, suppose one is testing 1,000,000 genes to detect for gene expression related to some disease. The uniform probability assignment implies that there is an extremely large prior probability that the number of influential genes is around 500,000 (e.g., the models, for which the number of positive genes is in the interval [499,000; 501000], have total probability of about 0.95); this is typically very
far from actual prior opinions.  This issue with
uniform probabilities was also described as a lack of {\em multiplicity control}
in \cite{Scott:Berg:05,Scott:Berg:10}. It is also shown therein that the Jeffreys
probabilities do induce multiplicity control.

\cite{GarDonSteel21} (a study of gene association involving $k=4088$ genes) illustrated the dramatic difference in the use of the two priors. With the Jeffreys choice, only one gene is clearly positive (posterior inclusion probability of 0.97), while the great majority are clearly ruled out (97\% have inclusion probability below 0.005). With the uniform prior, the posterior distribution tends to highly mimic the prior, suggesting that about half of the genes may be important (with most of the posterior inclusion probabilities
hovering around 0.5).
For other discussions of the choice of prior probabilities of models, see
\cite{Geor:Mccu:93,Geor:Mccu:97}, 
\cite{Clyd:Desi:Parmi:96}, 
\cite{Smit:Kohn:96}, 
\cite{Raft:Madi:Hoet:97}, 
\cite{Scott:Berg:05,Scott:Berg:10}, and 
\cite{Ley:Steel:09}. 

One concern with the Jeffreys probabilities is that they give large models as much weight
as small models. For instance, the null model and the full model both have equal prior
probability of $1/(k+1)$. Typically, some type of parsimony is desired in which smaller models are given more weight than large models. As an example of this problem,
consider the infant obesity study given in  \cite{Zuetal11} and involving 47 possible variables. Use of the Jeffreys prior probabilities in the Bayesian analysis resulted in the very surprising result that the highest posterior
probability model was the full model (see Section \ref{sec.obesity}). This is certainly not compatible with 
prior scientific beliefs in the obesity study and presumably arose because
the full model had the largest prior probability of 1/48. 

So we are faced with problems with both the uniform probabilities and the
Jeffreys probabilities.
In Section \ref{sec.pars} we present some choices of model prior probabilities that
seek to overcome these problems. The suggested possibilities will be evaluated in the remainder of the paper, first numerically and then theoretically. 

The word `objective' in the title of the paper deserves a comment. The uniform and Jeffreys model prior probabilities can certainly be called objective, but we have seen that they have serious
problems centered around a lack of parsimony. At the other extreme, the literature is replete with efforts to induce
severe parsimony, but we would not label those efforts as being objective. Some of the efforts most closely related to the proposals studied in this paper are discussed in
Section \ref{sec.toopars}.

What we are
seeking is choices of model prior probabilities that incorporate necessary parsimony
but only minimally so, in order to reasonably view them as being objective; this is admittedly vague. 

\section{Variable selection in the General linear model}\label{glm.formulas}
Let $X_0$ and $X$ be matrices of dimension $n\times k_0$ and $n\times k$ respectively. The columns in $X$ are the observed values of each of the entertained variables while those in $X_0$ contains covariates common to all models (e.g., an intercept). Assume $(X_0\, X)$ has rank $k_0+k$  and let $\n y$ be the $n$-dimensional vector of observations.

For $\n\gamma\ne \n 0$, $M_\gamma$ is
$$
\n y=X_0\n\beta_0+X_\gamma\n\beta_\gamma+\sigma\n\epsilon,\,\,
\n\epsilon\sim N_n(\n 0,I_n),
$$
where $X_\gamma$ is the sub-matrix of $X$ obtained by selecting the columns having a one in $\gamma$. In this paper we use the intrinsic priors derived in \cite{bergetal22}. These lead to a Bayes factor of $M_\gamma$ to $M_0$:
\begin{equation}\label{Bgamma}
B_{\gamma}=\int_0^\infty \Big(1+g \frac{SSE_\gamma}{SSE_0}\Big)^{-(n-k_0)/2}(1+g)^{(n-|\gamma|-k_0)/2}\, p^I(g)\, dg,
\end{equation}
where $p^I(g)$ is the mixing (density) function
$$
p^I(g)=\Big(\frac{n-k_0}{|\gamma|+1}\Big)^{1/2}\frac{1}{g\pi}\Big(g-\frac{n-k_0}{|\gamma|+1}\Big)^{-1/2},\,\,\,\, g>\frac{n-k_0}{|\gamma|+1},
$$
and $SSE_\gamma$ is the sum of squared errors for $M_\gamma$. Posterior probabilities of models are $Pr(M_\gamma\mid\n y)\propto B_{\gamma}Pr(M_\gamma)$.

In our numerical examples computations are assisted by the {\tt BayesVarSel} package of R \cite{Gar-DonFor18} which incorporates the above Bayes factor with the option {\tt prior.betas=}``{\tt IHG}''.
When enumeration is feasible (say $k\le 25$) the package provides exact values of the posterior inclusion probabilities; the highest posterior probabilty model (HPM) and the median probability model (MPM), given, respectively, by
$$
\mbox{HPM}=\arg\max_\gamma Pr(M_\gamma\mid\n y), \quad \mbox{MPM}=M_{(1_{\{Pr(x_1\mid y)>0.5\}},\ldots,1_{\{Pr(x_k\mid y)>0.5\}})},
$$
where $1_A$ is the indicator function and the
\begin{equation}
\label{pip}
Pr(x_j\mid\n y)=\sum_{\gamma:\gamma_j=1}Pr(M_\gamma\mid\n y),\,(1\le j\le k)\,.
\end{equation}
 are the posterior inclusion probabilities. When enumeration is not possible, posterior inclusion probabilities (and the MPM) are estimated using the Gibbs sampling algorithm studied in \cite{Ga-DoMa-Be13}. This method samples models from their posterior distribution and so more probable models are expected to be sampled. Nevertheless, the algorithm is not specifically designed to identify the HPM.

To identify the HPM we use the branch-and-bound method proposed by \cite{Furnival1974} and incorporated in the R package {\tt leaps} \citep{leaps-package} with the function {\tt regsubsets}. This algorithm identifies, for each possible dimension $1\le d\le k$, the model with the lowest SSE, say $M_{\gamma^\star(d)}$. Clearly, if $B_{\gamma}$ decreases as a function of $SSE_\gamma$ (for fixed $n, k_0$, $SSE_0$ and $|\gamma|$), then $M_{\gamma^\star(d)}$ is also the model with the largest posterior probability of dimension $d$ since $Pr(M_\gamma)$ is constant for all models sharing the same dimension.
From here, the HPM can be easily obtained as
$$
HPM=\arg\max_{1\le d\le  k}B_{\gamma^\star(d)}h(d).
$$
In the obesity example with $k=47$ the HPM is identified in less than 5 minutes.
In the following lemma we show the required monotonicity of $B_{\gamma}$.

\begin{lemma}
Using  (\ref{Bgamma}), consider $B_{\gamma}(SSE)$ as a function of $SSE$ for fixed $n, k_0$, $SSE_0$ and $|\gamma|$. If $SSE_1\le SSE_2$ then $B_{\gamma}(SSE_1)\ge B_{\gamma}(SSE_2)$.
\end{lemma}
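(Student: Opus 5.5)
The plan is to prove the monotonicity pointwise in $g$ and then integrate. In (\ref{Bgamma}), the quantities $n$, $k_0$, $SSE_0$ and $|\gamma|$ are held fixed. Hence the factor $(1+g)^{(n-|\gamma|-k_0)/2}$ and the mixing density $p^I(g)$ do not depend on $SSE$. The only place $SSE_\gamma$ enters is the factor
$$
\phi(SSE,g)=\Big(1+g\,\frac{SSE}{SSE_0}\Big)^{-(n-k_0)/2}.
$$
So the whole argument reduces to showing that $\phi(\cdot,g)$ is nonincreasing for each fixed $g>0$.

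\textbf{Pointwise step.} Fix $g>(n-k_0)/(|\gamma|+1)>0$ and take $0\le SSE_1\le SSE_2$. Since $g/SSE_0>0$, we have
$$
1+g\,SSE_1/SSE_0\le 1+g\,SSE_2/SSE_0,
$$
and both sides are at least $1$. The map $t\mapsto t^{-(n-k_0)/2}$ is nonincreasing on $(0,\infty)$ because the exponent $-(n-k_0)/2$ is negative; here I use $n>k_0$, which is implicit since $p^I$ requires $n-k_0>0$. It follows that $\phi(SSE_1,g)\ge\phi(SSE_2,g)$. Multiplying by the nonnegative factor $(1+g)^{(n-|\gamma|-k_0)/2}\,p^I(g)$ preserves this inequality for every $g$ in the support of $p^I$.

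\textbf{Integration step.} I integrate the pointwise inequality over $g\in\big((n-k_0)/(|\gamma|+1),\infty\big)$. Monotonicity of the integral for nonnegative integrands then gives $B_\gamma(SSE_1)\ge B_\gamma(SSE_2)$.

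\textbf{Main obstacle.} The only delicate point is that the integrals be well defined, i.e.\ finite, since otherwise an inequality of the form $\infty\ge\infty$ says nothing. For large $g$ the integrand behaves like
$$
g^{-(n-k_0)/2}\,g^{(n-|\gamma|-k_0)/2}\,g^{-3/2}=g^{-(|\gamma|+3)/2}.
$$
This power is integrable at infinity for every $|\gamma|\ge 0$, as long as $SSE>0$, which holds when $n>k_0+|\gamma|$. Near the lower endpoint, the singularity $(g-c)^{-1/2}$ is also integrable. Both $B_\gamma(SSE_1)$ and $B_\gamma(SSE_2)$ are therefore finite. In any case, the ordering of the integrals also holds in the extended sense for nonnegative functions, so the conclusion stands even without this check.
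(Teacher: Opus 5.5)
Your proof is correct and follows essentially the paper's argument. The paper likewise reduces everything to the factor $(1+g\,SSE/SSE_0)^{-(n-k_0)/2}$, noting that its derivative in $SSE$ is negative, whereas you use monotonicity of $t\mapsto t^{-(n-k_0)/2}$ and spell out the integration against the nonnegative remaining factors, which the paper leaves implicit. Your finiteness check is a harmless extra ($SSE_\gamma>0$ holds almost surely, not automatically, when $n>k_0+|\gamma|$), and, as you note, the comparison of nonnegative integrals in the extended sense makes it unnecessary.
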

\begin{proof}
It suffices to show that the inequality holds for the factor $(1+g \frac{SSE}{SSE_0})^{-(n-k_0)/2}$ in (\ref{Bgamma}), which follows from
$$
\frac{\partial}{\partial\, SSE}\, \Big(1+g \frac{SSE}{SSE_0}\Big)^{-(n-k_0)/2}<0 \,.
$$
\end{proof}


\section{More parsimonious prior probabilities}
\label{sec.pars}
Apart from Uniform and Jeffreys' we consider six additional choices of model prior probabilities, the first three
from the literature.
\subsection{Harmonic prior probabilities}
\cite{Berg:Peri:96a} mentioned the possibility of using the prior
$$
Pr(M_{\gamma}) =\frac{(1+|\gamma|)^{-1}}{{k \choose |\gamma|}Harmonic(k+1)}\,,  \quad Harmonic(k+1)=\sum_{j=0}^k (1+j^{-1}) \,,
$$
which is naturally called the {\em harmonic prior}. This prior assigns 
the collection of models of dimension  $d$ probability proportional to
$(1+d)^{-1}$, which is decreasing almost as slowly as possible in the dimension
and is thus a natural objective choice. 


\subsection{ CMG prior probabilities}
 \cite{Cas:Mo:Gi:14} proposed
$$
Pr(M_{\gamma}) = \frac{1}{{k \choose |\gamma|}}\int_0^\infty \mbox{Po}(|\gamma| \mid\lambda)\, \pi^I(\lambda) d\lambda \,.
$$ 
with $\lambda$ being assigned the intrinsic prior
$$
 \pi^I(\lambda)=\frac{1}{\sqrt{\pi\lambda}}\, _0F_1(1/2,\lambda)=\frac{e^{-(\lambda+1)}}{\sqrt{\pi\lambda}}\cosh(2\sqrt{\lambda}).
$$

\subsection{Half-$p$ Jeffreys prior probabilities}
\label{sec.halfp}
\cite{ViLee20} proposed for $p$ a Uniform distribution on [0,1/2], which we call
the ``half-$p$ Jeffreys prior." The prior probability of a model is then
$$
Pr(M_{\bm \gamma}) =2\, \int_0^{\frac{1}{2}} p^{|\gamma|}
(1-p)^{k-|\gamma|} dp=2\, \mbox{Beta}\Big(\frac{1}{2}, |\gamma|+1, k-|\gamma|+1\Big).
$$
Hence, $Pr(M_{\bm \gamma})$ can be easily computed by means of the incomplete Beta function.

\subsection{Half-$k$ Jeffreys prior probabilities}
One simple correction to the Jeffreys model prior probabilities would be to use the Jeffreys probabilities for models up to size $k/2$ if $k$ is even and $(k-1)/2$ if $k$ is odd, and then give the remaining models equal prior probability.  Note that the number
of large models receiving this constant prior probability is

$$ N_k = \left\{
     \begin{array}{ll}
       \sum_{i=1+k/2}^{k}\left(
                              \begin{array}{c}
                                k \\
                                i \\
                              \end{array}
                            \right)
 & \hbox{if $k$ is even} \\
         \sum_{i=(k+1)/2}^{k}\left(
                              \begin{array}{c}
                                k \\
                                i \\
                              \end{array}
                            \right) & \hbox{if $k$ is odd.}
     \end{array}
   \right.$$

The resulting model prior probabilities
unfortunately have an unappealing discontinuity, so we present here a version
which smooths the discontinuity.
In the case of $k$ even:
$$
Pr(M_\gamma)=\left\{
\begin{array}{ccc}
	C_1\frac{1}{(k+1)\binom{k}{|\gamma|}} & \mbox{if} & |\gamma| \le k/2\\
	C_2\frac{k}{2(k+1)N_k} & \mbox{if} & |\gamma|\ge k/2+1
\end{array}
\right. \mbox{($k$ even)},
$$
where $C_1$ and $C_2$ are such that $Pr(M_\gamma)$ is a probability mass function and
$$
Pr(M_\gamma:|\gamma|=k/2)=Pr(M_{\gamma'}: |\gamma'|=k/2+1).
$$
Solving the corresponding system of equations we arrive at:
$$
C_2=\frac{2(k+1)}{k+2}\frac{\frac{2N_k}{\binom{k}{k/2}}}{1+\frac{k}{k+2}\frac{2N_k}{\binom{k}{k/2}}},\,\
C_1=\frac{2(k+1)}{k+2}-C_2\frac{k}{k+2}.
$$

Similarly, in the case of $k$ odd:
$$
Pr(M_\gamma)=\left\{
\begin{array}{ccc}
	C_1\frac{1}{(k+1)\binom{k}{|\gamma|}} & \mbox{if} & |\gamma|\le (k-1)/2\\
	C_2\frac{1}{2N_k} & \mbox{if} & |\gamma|\ge (k+1)/2
\end{array}
\right. ,
$$
and imposing
$$
Pr(M_\gamma:|\gamma|=(k-1)/2)=Pr(M_{\gamma'}: |\gamma'|=(k+1)/2),
$$
we arrive at:
$$
C_2=\frac{4N_k}{(k+1)\binom{k}{(k-1)/2}}\frac{1}{1+\frac{2N_k}{(k+1)\binom{k}{(k-1)/2}}},\,\,
C_1=2-C_2.
$$

\subsection{Beta and hierarchical Beta prior probabilities}
It has been suggested that one simply use a $Beta(1,\alpha)$ prior for the inclusion 
probability $p$, with $\alpha >1$,  and $\alpha=2$ seemingly giving a simple parsimonious choice. The $Beta(1,2)$ choice results in prior model probabilities
$$
Pr(M_{\bm \gamma}) =2\int_0^{1} p^{|\gamma|}
(1-p)^{1+k-|\gamma|} dp=2 \, \mbox{Beta}( |\gamma|+1, k-|\gamma|+2).
$$

We also look at a hierarchical prior,
 assigning $\alpha$ the {$\pi(\alpha)= \frac{1}{2} \alpha^{-3/2}$} prior density on $(1,\infty)$ (about as flat as possible while maintaining propriety and simplicity). 
 The resulting prior for $p$ is (from Mathematica)
\begin{eqnarray*} \pi^{HB}(p) = \int_1^{\infty} Beta(p \mid 1,\alpha) \frac{1}{2} \alpha^{-3/2} 
d \alpha 
=  \frac
  { \sqrt{\pi} \, \Phi\left(
-\sqrt{-2\log{(1 - p)}}\right)}{(1 -
    p) \sqrt{(-\log{(1 - p)})}} 
 \end{eqnarray*} and will be called the hierarchical Beta prior.
The resulting prior probability of a model is
$$\mbox{Pr}(M_{\bm \gamma}) =\int_0^1 p^{|\gamma|}
(1-p)^{k-|\gamma|} \pi^{HB}(p) dp .
$$
This requires numerical integration, but it only has to be done $k$ times.

\subsection{Overly parsimonious choices from the perspective of objectivity}
\label{sec.toopars}

Several authors have focused on alternative distributions for $p$ in \eqref{General}; \cite{Wils:Iver:Clyd:Etal:10} proposed $\pi(p)=Beta(p\mid 1, k)$ and  \cite{Castillo_etal_15} suggested $\pi(p)=Beta(p\mid a, k^u)$, for certain $a>0$ and $u>1$. 
Since the means of these priors are decreasing at least as fast as $O(1/k)$, they force 
concentration on very small models, which we would not label objective.

Other authors have worked directly with the probability assigned to all models with the same dimension
\begin{equation}\label{eq.Mg.dim}
Pr(M_\gamma)=\frac{1}{\binom{k}{|\gamma|}} Pr({\mathfrak M}(|\gamma|)),
\end{equation}
where ${\mathfrak M}(d)$ is the set of models such that $|\gamma|=d$ for $d=0,\ldots,k$. Following this path,  \cite{WoTay-RoFue25} recently introduced the ``Matryoshka doll prior'' that depends on a parameter $\eta$. This prior does not have a closed-form, but the authors prove that $Pr({\mathfrak M}(d))$ has the Poisson distribution, Po$(d\mid \log(1+\eta^{-1}))$ as its limiting distribution as $k$ increases. Their default proposals are $\eta=1$ and $\eta=1/(e-1)$. Thus $Pr({\mathfrak M}(d))$ behaves like $1/d!$, which
is too sharply decreasing in model size to be labeled objective.

Finally, \cite{ViLee20} proposed $Pr(M_\gamma)\propto e^{-c|\gamma|}$ with $c=1.2785$, but acknowledge that it does not provide needed parsimony. To fix this issue they suggest using $c=\log(k)$ or to use a hyperprior on $c$. Following this last path, and after an interesting discussion the authors arrive at a distribution on $c$ which is equivalent to using $\pi(p)=\mbox{Un}(p\mid 0,1/2)$ in \eqref{General}, as highlighted in 
Section \ref{sec.halfp}. Interestingly, the hierarchical strategy converted a proposal that severely penalized complexity into a distribution with a much softer parsimony effect.

\section{Numerical comparisons of the prior probabilities}
We have a total of 8 different model prior probabilities, and begin their comparison
by looking at some numerical properties.
\subsection{Graphs of the priors}
 First, Figure \ref{fig.allpriors} has two graphs
of the prior model probabilities. The left graph gives the log of the model prior probability
for a model of the indicated size. The right graph gives the sum of the prior probabilities
over all models of a given size. Each measures parsimony in a different way.

The right graph defines the weakest notion of parsimony, requiring only that the sum
of model prior probabilities of a given size be non-increasing in size. This is satisfied
by all prior probability assignments except the uniform which, as mentioned in the introduction, gives much more weight to middle sizes than smaller or larger sizes.
As also mentioned in the introduction, the Jeffreys prior can be inappropriate as
it assigns equal prior probability to the null and full models.
Some other observations from this graph: 
\begin{itemize}
\item The CMG prior kills off all models of dimension bigger than 14.
\item The half-$k$ and half-$p$ priors kill off models of size bigger than 34,
while the hierarchical, Beta(1,2) and harmonic choices do give non-negligible mass to all model sizes.
\item All the prior probability choices (except the uniform) clearly give more mass to smaller models than
does the Jeffreys prior. Since the Jeffreys prior is shown in \cite{Scott:Berg:10}
to provide strong control for multiplicity, it follows that the other choices (except
the uniform) will also provide strong control for multiplicity.
\end{itemize}

A stricter definition of parsimony would ask that, in addition to the above
notion of parsimony, individual model prior probabilities should be non-increasing as model size grows. 
From the left graph, this only holds for the half-$k$ and half-$p$ prior probabilities. (It also holds for the uniform prior, but we have ruled that out by the previous definition
of parsimony.) The other five priors all have similar behavior; strict parsimony holds for smaller models
but not for larger models.

\begin{figure}[t!]
	\begin{center}
		\includegraphics[scale=0.7]{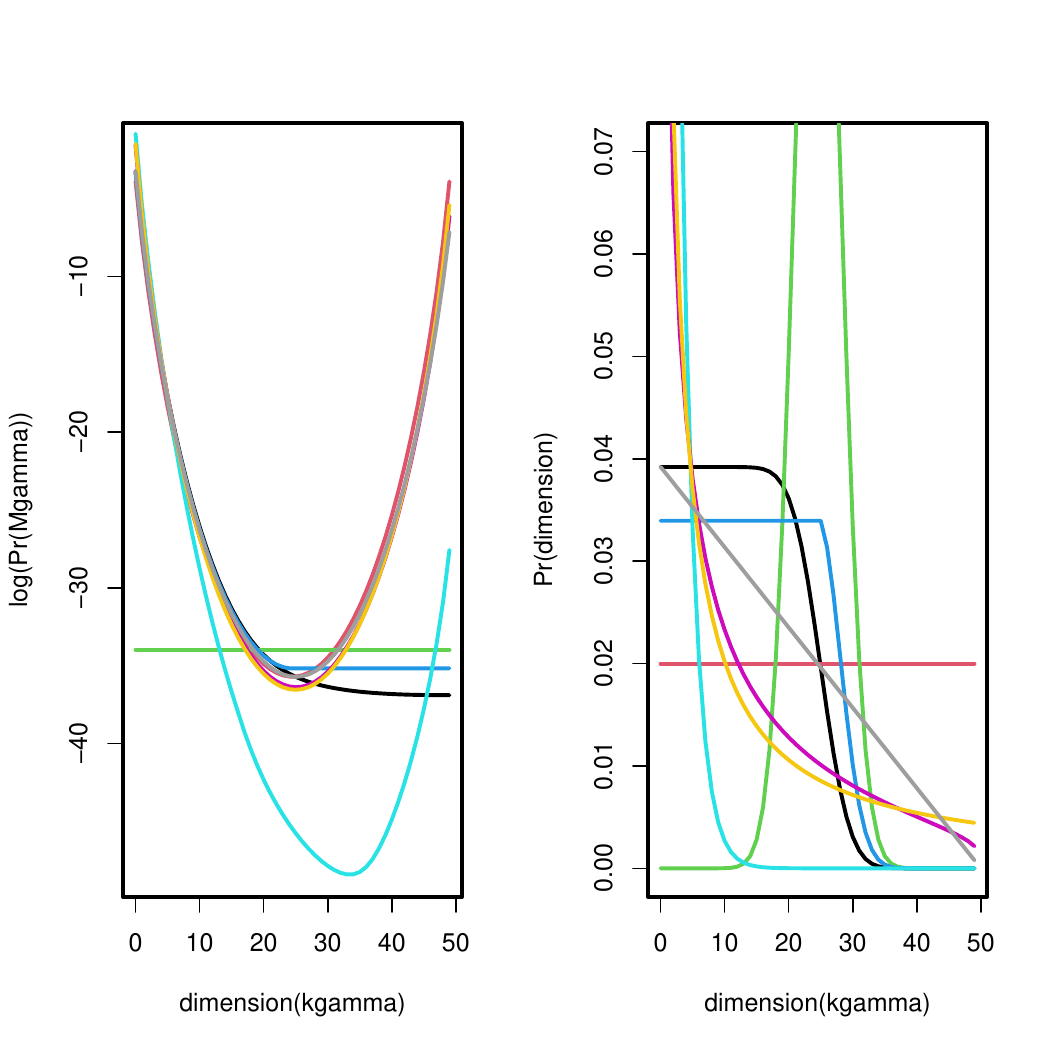}
	\end{center}
	\caption{For $k=49$, Half-$p$ (black); Jeffreys (red), Uniform (Green), Half-$k$ (blue); CGM (Cyan); Hierarchical (magenta); Beta(1,2) (gray); Harmonic (yellow).}
	\label{fig.allpriors}
\end{figure}

While the Jeffreys, uniform and CMG priors do not seem tenable here, there is 
no clear conclusion concerning the other priors. If strict parsimony is deemed to be
important, only the Half-k and Half-p priors are satisfactory. On the other hand,
they essentially kill off all models of size greater than 34, which might be viewed
as being too parsimonious. 

\subsection{A discriminating example}
\label{sec.obesity}
We consider the $k=47$ variable obesity example for all prior model choices.
Bayesian model uncertainty was implemented with these eight prior model probability choices
and intrinsic priors for the model parameters. Reported in Table \ref{obesity} are the
posterior inclusion probabilities, defined in (\ref{pip}), for each of the possible variables. Here are some
key observations from the table.
\begin{itemize}
\item The inclusion probabilities do not vary much across the choice of prior
probabilities, except for the uniform prior where the inclusion probabilities
are systematically higher and the CGM with substantially lower inclusion probabiltiies.
\item The HPM (denoted H in the table) varies considerably over the choice of prior probabilities. As mentioned in the introduction, using the Jeffreys prior probabilities
results in the HPM being the full model, a problematic result. Surprisingly this is
also the case for the harmonic prior probabilities, suggesting that it is 
 not parsimonious enough. The constant prior resulted in the HPM
containing 15 variables, while the other choices of prior probabilities resulted
in smaller 8 to 12 variable models (and mostly the same variables).
\item Curiously, the MPM (denoted M in the table) is quite stable over
all prior probability choices, never differing by more than two variables, except for the CGM prior.
This is another piece of evidence (from a large literature not reported here) for favoring the MPM over the HPM.
\end{itemize}

\begin{figure}[t!]
	\begin{center}
		\includegraphics[scale=0.4]{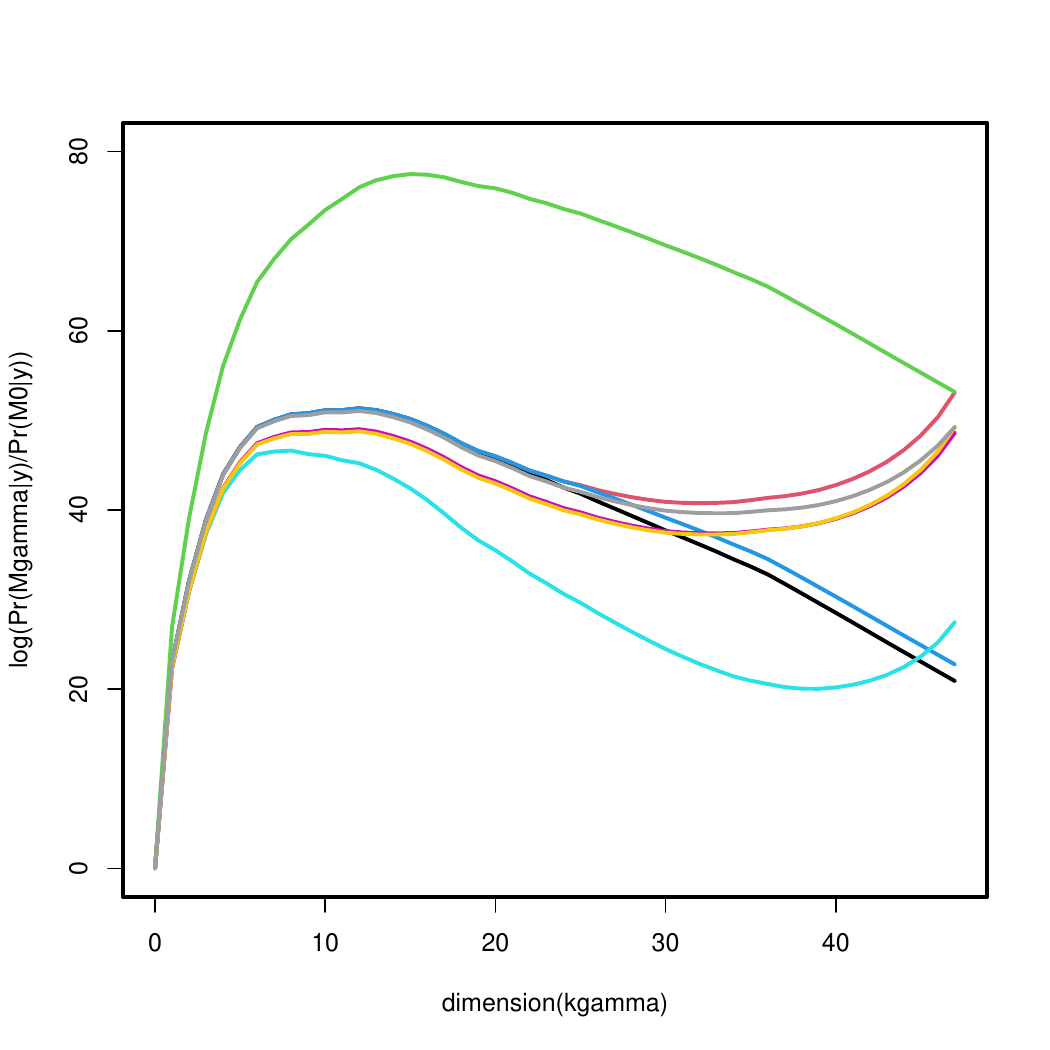}
	\end{center}
	\caption{For the obesity dataset, $k=47$, the log of ratio of the posterior probability of the most probable model of each dimension to the null model for different priors. Half-$p$ (black); Jeffreys (red), Uniform (Green), Half-$k$ (blue); CGM (Cyan); Hierarchical (magenta); Beta(1,2) (gray); Harmonic (yellow).}
	\label{fig.allposteriors}
\end{figure}
Figure~\ref{fig.allposteriors} presents, for this example, the log of the ratio of the posterior probability of the most probable model of each dimension to the null model, for the different prior probabilities. All the priors exhibit a bimodal behavior, except for the uniform, Half-$k$ and
Half-$p$ priors. A curiosity is that the hierarchical and harmonic prior probabilities seem to exactly track each other, yet the harmonic ended up with the HPM being the full model and the hierarchical did not.
\begin{table}[t!]
\centering
\begin{adjustbox}{max width=\textwidth}
\vspace{-3cm}
\begin{tabular}{l|cccccccc}
Variable & Jeffreys & Uniform & Half-$k$ & Half-$p$ & Hierar & Harm & Beta1.2 & CGM\\
\hline                                                     
ActivDepor     & 0.67H/M & 0.70H/M & 0.63H/M & 0.61H/M & 0.63H/M & 0.63H/M & 0.64H/M & 0.40-/- \\
Chuches        & 1.00H/M & 1.00H/M & 1.00H/M & 1.00H/M & 1.00H/M & 1.00H/M & 1.00H/M & 1.00H/M \\ 
CincoComidas   & 0.54H/M & 0.56-/M & 0.51-/M & 0.48-/- & 0.51-/M & 0.51H/M & 0.51-/M & 0.29-/- \\
Daceite        & 0.22H/- & 0.23-/- & 0.18-/- & 0.15-/- & 0.19-/- & 0.20H/- & 0.19-/- & 0.05-/- \\
Dcereal        & 0.34H/- & 0.34-/- & 0.28-/- & 0.25-/- & 0.30-/- & 0.29H/- & 0.30-/- & 0.09-/- \\
Desayuno       & 0.98H/M & 0.98H/M & 0.98H/M & 0.97H/M & 0.98H/M & 0.98H/M & 0.98H/M & 0.97H/M \\
Dgalleta       & 0.30H/- & 0.31-/- & 0.23-/- & 0.20-/- & 0.25-/- & 0.25H/- & 0.26-/- & 0.07-/- \\
Dislipemias    & 0.56H/M & 0.59-/M & 0.51-/M & 0.45-/- & 0.51-/M & 0.51H/M & 0.52-/M & 0.22-/- \\
Dleche         & 0.28H/- & 0.29-/- & 0.23-/- & 0.21-/- & 0.25-/- & 0.25H/- & 0.25-/- & 0.10-/- \\
Dotros         & 0.42H/- & 0.46-/- & 0.37-/- & 0.33-/- & 0.38-/- & 0.38H/- & 0.39-/- & 0.12-/- \\
Dpan           & 0.31H/- & 0.31-/- & 0.25-/- & 0.22-/- & 0.26-/- & 0.26H/- & 0.26-/- & 0.08-/- \\
Dzumoenv       & 0.82H/M & 0.85H/M & 0.80H/M & 0.78H/M & 0.81H/M & 0.81H/M & 0.81H/M & 0.60H/M \\
Dzumonat       & 0.21H/- & 0.21-/- & 0.16-/- & 0.14-/- & 0.17-/- & 0.18H/- & 0.18-/- & 0.05-/- \\
Faperitivos    & 0.37H/- & 0.38-/- & 0.30-/- & 0.27-/- & 0.32-/- & 0.32H/- & 0.32-/- & 0.10-/- \\
Farroz         & 0.29H/- & 0.30-/- & 0.24-/- & 0.21-/- & 0.25-/- & 0.25H/- & 0.26-/- & 0.09-/- \\
Fcarnes        & 0.38H/- & 0.36-/- & 0.31-/- & 0.28-/- & 0.32-/- & 0.33H/- & 0.33-/- & 0.13-/- \\
Fchucherias    & 0.33H/- & 0.35-/- & 0.27-/- & 0.23-/- & 0.28-/- & 0.29H/- & 0.30-/- & 0.08-/- \\
Fdulces        & 0.99H/M & 0.99H/M & 0.99H/M & 0.99H/M & 0.99H/M & 0.99H/M & 0.99H/M & 0.98H/M \\
Ffiambres      & 0.24H/- & 0.24-/- & 0.18-/- & 0.16-/- & 0.20-/- & 0.20H/- & 0.21-/- & 0.06-/- \\
Ffritos        & 0.28H/- & 0.30-/- & 0.23-/- & 0.20-/- & 0.25-/- & 0.25H/- & 0.25-/- & 0.07-/- \\
Ffruta         & 0.21H/- & 0.22-/- & 0.17-/- & 0.14-/- & 0.18-/- & 0.18H/- & 0.19-/- & 0.05-/- \\
Fhuevos        & 0.64H/M & 0.67-/M & 0.61-/M & 0.57-/M & 0.61H/M & 0.61H/M & 0.61H/M & 0.37-/- \\
Flacteos       & 0.38H/- & 0.40-/- & 0.33-/- & 0.29-/- & 0.34-/- & 0.33H/- & 0.35-/- & 0.13-/- \\
Flegumbres     & 0.31H/- & 0.32-/- & 0.26-/- & 0.22-/- & 0.28-/- & 0.27H/- & 0.28-/- & 0.10-/- \\
Fpan           & 0.57H/M & 0.62-/M & 0.53-/M & 0.50-/M & 0.54-/M & 0.54H/M & 0.56-/M & 0.28-/- \\
Fpescado       & 0.22H/- & 0.21-/- & 0.17-/- & 0.14-/- & 0.19-/- & 0.18H/- & 0.19-/- & 0.05-/- \\
Fprecocina     & 0.23H/- & 0.25-/- & 0.18-/- & 0.16-/- & 0.20-/- & 0.20H/- & 0.21-/- & 0.06-/- \\
Frefrescos     & 0.44H/- & 0.47-/- & 0.37-/- & 0.34-/- & 0.39-/- & 0.38H/- & 0.39-/- & 0.14-/- \\
Fruta          & 0.21H/- & 0.21-/- & 0.16-/- & 0.13-/- & 0.18-/- & 0.17H/- & 0.17-/- & 0.05-/- \\
Fverduras      & 0.42H/- & 0.45-/- & 0.34-/- & 0.29-/- & 0.36-/- & 0.35H/- & 0.38-/- & 0.08-/- \\
HorasPCsem1    & 0.28H/- & 0.28-/- & 0.22-/- & 0.19-/- & 0.24-/- & 0.24H/- & 0.24-/- & 0.07-/- \\
HorasPCsem2    & 0.20H/- & 0.21-/- & 0.16-/- & 0.14-/- & 0.18-/- & 0.17H/- & 0.18-/- & 0.05-/- \\
HorasTVsem2    & 0.33H/- & 0.33-/- & 0.29-/- & 0.26-/- & 0.30-/- & 0.29H/- & 0.30-/- & 0.15-/- \\
HoraSuenyo     & 0.32H/- & 0.34-/- & 0.27-/- & 0.23-/- & 0.29-/- & 0.28H/- & 0.29-/- & 0.08-/- \\
HTA            & 0.30H/- & 0.32-/- & 0.25-/- & 0.21-/- & 0.26-/- & 0.26H/- & 0.27-/- & 0.08-/- \\
LactMater      & 0.22H/- & 0.22-/- & 0.17-/- & 0.15-/- & 0.19-/- & 0.19H/- & 0.20-/- & 0.05-/- \\
MadreObesa     & 0.99H/M & 0.99H/M & 0.99H/M & 0.99H/M & 0.99H/M & 0.99H/M & 0.99H/M & 0.99H/M \\
NumHnosOb      & 0.89H/M & 0.91H/M & 0.88H/M & 0.87H/M & 0.88H/M & 0.88H/M & 0.89H/M & 0.73H/M \\
OtrosPatol     & 0.20H/- & 0.21-/- & 0.16-/- & 0.13-/- & 0.17-/- & 0.17H/- & 0.18-/- & 0.05-/- \\
PadreObeso     & 1.00H/M & 1.00H/M & 1.00H/M & 1.00H/M & 1.00H/M & 1.00H/M & 1.00H/M & 1.00H/M \\
PesoNac        & 0.84H/M & 0.88H/M & 0.81H/M & 0.79H/M & 0.81H/M & 0.80H/M & 0.82H/M & 0.46-/- \\
ProbOsteo      & 0.85H/M & 0.89H/M & 0.83H/M & 0.80H/M & 0.83H/M & 0.82H/M & 0.83H/M & 0.59-/M \\
ProbPsico      & 0.31H/- & 0.32-/- & 0.25-/- & 0.20-/- & 0.28-/- & 0.26H/- & 0.28-/- & 0.07-/- \\
ProbResp       & 0.29H/- & 0.30-/- & 0.23-/- & 0.21-/- & 0.26-/- & 0.26H/- & 0.26-/- & 0.08-/- \\
TallaNac       & 0.92H/M & 0.95H/M & 0.91H/M & 0.89H/M & 0.90H/M & 0.90H/M & 0.91H/M & 0.61-/M \\
TVDiario       & 0.87H/M & 0.88H/M & 0.85H/M & 0.84H/M & 0.86H/M & 0.86H/M & 0.86H/M & 0.77H/M \\
Verduras       & 0.24H/- & 0.24-/- & 0.18-/- & 0.15-/- & 0.20-/- & 0.21H/- & 0.21-/- & 0.05-/- \\
\hline                                                                                         
$|HPM|/|MPM|$  & 47/16  & 15/16     & 12/16     & 12/14     & 12/16     & 47/16     & 12/16     & 8/10      \\
\end{tabular}   
\end{adjustbox}
\label{obesity}
\caption{For the obesity dataset with $k=47$ variables and $n=591$, posterior inclusion probabilities obtained with different prior distributions $P(M_{\n\gamma})$. It is also indicated whether the variable belongs to the HPM (designated by H) and/or to the MPM (designated by M). The last row is the dimension of the HPM and the MPM.}
\end{table}

\section{Philosophical and theoretical comparisons }

\subsection{Philosophical issues}
\subsubsection{ When $k=1$}
If $k=1$ then it seems natural to require $Pr(M_0)=Pr(M_1)=1/2$, the
default choice for hypothesis testing. Unfortunately, only the uniform and Jeffreys
prior probabilities satisfy this, although all choices of prior probabilities
do assign significant mass to the null model.

Given that the uniform and Jeffreys choices lack the minimal necessary parsimony, we
conclude that this condition cannot be satisfied. It is not unreasonable, in any case, to view hypothesis testing (with two hypotheses) and model
uncertainty (with many models) as two quite different activities, with differing thinking
behind the definition of objective prior probabilities.

\subsubsection{Simplicity}
All the model prior probabilities are closed form, except the hierarchical and CMG priors
which require evaluation of $k$ one-dimensional integrals. The half-$k$ prior is rather
cumbersome and is thus somewhat less appealing.

\subsection{Theoretical comparisons}

\subsubsection{Prior probability of the collection of models of a given size}
\label{sec.gsize}
All considered model prior probabilities can be expressed as
\begin{equation}\label{initial.pi}
Pr(M_\gamma)=\frac{\mathfrak{M}(|\n\gamma|)}{\binom{k}{|\gamma|}},
\end{equation}
where $\mathfrak{M}(d)$ is a probability mass function over $d\in\{0,1,\ldots,k\}$
and is defined as the total prior probability of all models of size $d$.
It is thus of interest to compare the functions $\mathfrak{M}(d)$ for each of the proposals. (We omit the Half-k Jeffreys prior, since $\mathfrak{M}(d)$ is
very cumbersome for this prior.) These functions are as follows.
\begin{itemize}
	\item Jeffreys: $\mathfrak{M}(d)=1/(k+1)$,
	\item Uniform: $\mathfrak{M}(d)=\binom{k}{d}/2^k$,
\item Beta(1,2): $\mathfrak{M}(d) = \frac{2}{k+2}(1-\frac{d}{k+1})$,
\item Half-$p$ Jeffreys: $\mathfrak{M}(d) = {k \choose d} \frac{2k}{k+1}\mbox{Beta}\Big(\frac{1}{2}\frac{k+1}{k}, d+1, k-d+1\Big)$,
\item Hierarchical Beta: $\mathfrak{M}(d) ={k \choose d}\int_0^1 p^{d}
(1-p)^{k-d} \pi^{HB}(p) dp$,
\item Harmonic: $\mathfrak{M}(d) = \frac{(1+d)^{-1}}{Harmonic(k+1)}$,
\item CMG: $ \mathfrak{M}(d) = \frac{ \int_0^{\infty} Po(d \mid \lambda) \pi^I(\lambda) d \lambda} {  \sum_{d=0}^k \int_0^{\infty} Po(d \mid \lambda) \pi^I(\lambda) d \lambda}$.
\end{itemize}

The right hand side of Figure \ref{fig.allpriors} graphed these functions, but here
we drill down on $d=0,1$ (small models), $d=k/2$ (medium sized
model) and $d=k$ (full model), to understand extreme behaviors (often as $k$ grows).
Table \ref{probs model size} gives  values of $\mathfrak{M}(d)$ at these values of $d$ for the various prior
probabilities. The large $k$ approximations are derived in Appendix 1.
\begin{table}[h!]
\label{probs model size}
\begin{tabular}{c|ccccccc}
\hline
& \multicolumn{7}{c}{Some values of $\mathfrak{M}(d)$ for the various priors ($\approx$ refers to large $k$)} \\ \hline
$d$ & Unif & Jeff & Beta(1,2)&Half-$p$ &Hierar&CMG&  Harm   \\ \hline
$0$ & $2^{-k}$ & $(k+1)^{-1}$ &$\frac{2}{k+2}$ & $\approx \frac{2}{k}$& $\approx \frac{\pi}{2\sqrt{k}}$ & $\approx 0.43$ &  $\frac{1}{Harmonic(k+1)}$ \\ \hline
$1$ & $k 2^{-k}$ & $(k+1)^{-1}$& $\frac{2k}{(k+1)(k+2)}$& $\approx \frac{2}{k}$ &$\approx \frac{\pi}{4\sqrt{k}}$ & $\approx 0.21$ & $\frac{1}{2 \ Harmonic(k+1)}$\\ \hline
$k/2$ & $\approx \sqrt{\frac{2}{k \pi}}$  & $(k+1)^{-1}$  &$(k+1)^{-1}$ & $\approx  (k+1)^{-1}$ &$\approx \frac{0.51}{k}$ & $\approx 0$& $\approx \frac{2}{k \ \log k} $ \\ \hline
$k$ & $2^{-k}$ & $(k+1)^{-1}$ &$\frac{2}{(k+1)(k+2)}$& $\approx \frac{e}{k 2^k}$ & $\approx \frac{1}{2k\log k}$ & $\approx 0$ & $\approx \frac{1}{k \ \log k} $   \\ \hline
\end{tabular} %
\end{table}
Some observations:
\begin{itemize}
\item Starting with the bottom row, of the contending priors (i.e., not
considering the Uniform and Jeffreys priors), it can be argued that the Half-p and CMG priors give too little probability to the full model.  Curiously,
the Hierarchical and Harmonic priors give essentially the same weight to the full model. The Beta(1,2) weight to the full model is not unreasonable.
\item Going to the first two rows, the CMG prior gives by far the most weight to small
models, with the Harmonic prior and then the Hierarchical Beta prior being next
in terms of favoring small models.
\item For midsize models, all the mixture over $p$ priors have essentially the
same behavior, while the CMG probabilities are essentially zero and the Harmonic
prior has an additional $1/ \log k$ penalty.
\end{itemize}
The above considerations suggest that the  Hierarchical and Harmonic priors are the best, based on $\mathfrak{M}(d)$, with the Beta(1,2) prior deserving some consideration because of its simplicity.

\subsubsection{Prior variable inclusion probabilities}
The prior inclusion probability of any variable $x_j$ is
$$
Pr(x_j):= Pr(\cup\{M_\gamma:\gamma_j=1\})=E^\mathfrak{M}(d)/k \,,
$$
the last equality following from the fact that, for all models of size $d$, the conditional inclusion
probability of a variable is clearly $d/k$. 
Here are the prior inclusion probabilities for the various prior probabilities ,
with $\approx$ indicating a large $k$ approximation that is established in the appendix.
(The approximation for the Harmonic prior is extremely accurate, while the
approximation for the CMG prior is not very good, only becoming accurate
when $k > 200$.)
\begin{itemize}
	\item Jeffreys: $Pr(x_j)= 0.5 $,
	\item Uniform: $Pr(x_j)=0.5$,
\item Beta(1,2): $Pr(x_j)= \frac{1}{3}$,
\item Half-$p$ Jeffreys: $Pr(x_j) = \frac{k+1}{4k}$,
\item Hierarchical Beta: $Pr(x_j) = 1-\frac{\pi}{4} \approx 0.22$,
\item Harmonic: $ Pr(x_j)= \frac{1+k^{-1}}{Harmonic(k+1)}-\frac{1}{k} 
\approx \frac{1+k^{-1}}{\log (k+1) +0.5772+\frac{1}{2(k+1)}}-\frac{1}{k} $, 
\item CMG: $ Pr(x_j) = \frac{ \int_0^{\infty} Po(d \mid \lambda) \pi^I(\lambda) d \lambda} { k \sum_{d=0}^k \int_0^{\infty} Po(d \mid \lambda) \pi^I(\lambda) d \lambda}
\approx \min\{1,\frac{3}{2k}\}$ .
\end{itemize}
The following table gives these inclusion probabilities for various values of $k$.

\[
\begin{tabular}{c|ccccccccc}
\hline
& \multicolumn{8}{c}{Prior variable inclusion probabilties for the various priors} \\ \hline
$k$ & Unif & Jeff & Beta(1,2)&Half-$p$ &Hierar&CMG& CMG$\approx$ & Harm & Harm$\approx$ \\ \hline
$1$ & 1 & 1 & 0.33 & 0.5 &0.22 & 1 & 1& 0.33  &  0.32 \\ \hline
$3$ & 0.50 & 0.50 & 0.33& 0.33 &0.22 & 0.57 &0.5& 0.31& 0.31\\ \hline
$5$ & 0.50 & 0.50 &0.33 & 0.3 & 0.22& 0.43 &0.30 & 0.29 & 0.29\\ \hline
$7$ & 0.50 & 0.50 &0.33 &0.29 &0.22 & 0.34 &0.21& 0.28 & 0.28\\ \hline
$9$ & 0.50 & 0.50 & 0.33&0.28 &0.22 & 0.28 &0.17 & 0.27& 0.27 \\ \hline
$20$ & 0.50 & 0.50 &0.33 &0.28 &0.22 & 0.13 &0.08& 0.24 & 0.24\\ \hline
$200$ & 0.50 & 0.50 &0.33 &0.25 & 0.22& 0.01 &0.01& 0.17 & 0.17\\ \hline
\end{tabular} %
\]

Again, initial  thoughts were that objectivity demanded that the prior variable inclusion probability should be 0.5, but that only seems to be satisfied by the uniform and 
Jeffreys priors, which are untenable.  Note that if one did the mixture over $p$ prior probabilities
with $p$ having any $Beta(a,a)$ prior (or, indeed, any prior with mean 0.5), the
prior variable inclusion probability would also be 0.5, but none of these priors
are appealing as they all violate parsimony.

The key issue here is whether the CMG and Harmonic prior are right with decreasing
prior inclusion probabilities, or the others are right with constant prior inclusion
probabilities. Arguments can be made both ways. If one is including only variables
that are viewed as being of legitimate scientific interest, then adding more
legitimate variables should not cause the prior variable inclusion probabilities
to decrease. But, if one is including lots of `junk' variables, then it would be
natural to have the prior inclusion probabilities decrease.

\section{Conclusions}

As expected for such a nebulous problem, there are no clear conclusions. Here is a summary of what we have seen.
\begin{itemize}
\item Again, the uniform prior and the Jeffreys prior probabilities are simply not tenable.
\item The CMG prior is too parsimonious to be called objective.
\item The Harmonic prior has many admirable qualities, but it failed on the obesity example, suggesting that it is not parsimonious enough. On the other hand,
if decreasing prior inclusion probabilities are required, it is the clear choice.
\item If complete parsimony is required, only the Half-$k$ and Half-$p$ priors
should be considered, but they do kill off large models, which may be
viewed as undesirable. The behavior of the two over the various criteria is similar, so
the comparative simplicity of the Half-$p$ prior would favor it.
\item The $Beta(1,2)$ and hierarchical Beta priors performed well over most criteria, 
except they were not completely parsimonious. The Hierarchical Beta prior gave more 
weight to small models, which might be viewed as desirable. On the other hand, the
simplicity of the $Beta(1,2)$ prior is appealing.
\end{itemize}
In conclusion, a case can be made for the Harmonic, Half-$p$, $Beta(1,2)$ or Hierarchical Beta priors, depending on which criteria most apply to a given scientific setting.

\section{Appendix}
\subsection{Large $k$ behavior of $\mathfrak{M}(d)$ }
Here are the large $k$ approximations to the quantities in Section \ref{sec.gsize}.

\begin{itemize}
	\item Jeffreys: $\mathfrak{M}(d) \approx 1/k$.
	\item Uniform: $\mathfrak{M}(d)=\binom{k}{d}/2^k$.
\item Beta(1,2): $\mathfrak{M}(d) \approx \frac{2}{k}(1-\frac{d}{k})$.
\item Half-$p$ Jeffreys: $\mathfrak{M}(d) = {k \choose d} \frac{2k}{k+1}\mbox{Beta}\Big(\frac{1}{2}\frac{k+1}{k}, d+1, k-d+1\Big ) \approx 
 \frac{2}{k}$ for $d$ large but smaller than  $\frac{k}{2}$. For $d>\frac{k}{2}$, we not 
have an approximation but, for $d=k$, $\mathfrak{M}(d) \approx \frac{e}{2^k k}$.
\item Hierarchical Beta: $\mathfrak{M}(d) ={k \choose d}\int_0^1 p^{d}
(1-p)^{k-d} \pi^{HB}(p) dp \approx \frac{\pi^{HB}(\frac{d}{k})}{k}$ for large $d$ and
$d/k$ bounded away from zero (necessary because $\pi^{HB}$ is infinite at zero).
Note that, as $\frac{d}{k} \rightarrow 1$, $\pi^{HB}(\frac{d}{k}) \approx \frac{1}{-2\log(1-\frac{d}{k})}$.
\item Harmonic: $\mathfrak{M}(d) = \frac{(1+d)^{-1}}{Harmonic(k+1)} \approx \frac{(1+d)^{-1}}{\log(k+1)} $.
\end{itemize}
{\em Proof.} For the mixture over $p$ priors and if $k$ and $d$ are large, 
\begin{eqnarray*} \mathfrak{M}(d) &=& \left(
              \begin{array}{c}
                k \\
                d \\
              \end{array}
            \right)
\int_0^1 p^{d} (1-p)^{k-d} \pi(p)dp   \\
&=& \frac{1}{k+1} \int_0^1 Beta(p \mid d+1, k-d+1) \pi(p) dp 
\approx \frac{\pi(d/k)}{k}
\end{eqnarray*}
if $\pi(v)$ is continuous and nonzero. This follows because the mean of
the Beta distribution is approximately $\mu = \frac{d}{k}$ and the standard deviation
is smaller than $\frac{\mu}{\sqrt d}$, so the Beta distribution is concentrated 
around the mean.

\subsection{ Proofs of the prior inclusion probabilities and approximations}
For the mixture
over $p$ priors,
\begin{eqnarray*} 
Pr(x_j) = \frac{E^\mathfrak{M}(d)}{k} &=& \frac{1}{k}\sum_{d=0}^k d \left(
                          \begin{array}{c}
                            k \\
                            d \\
                          \end{array}
                        \right)
\int_0^1 p^d (1-p)^{k-d} \pi(p) \, d \, p \\
&=& 
\frac{1}{k} \int_0^1 \pi(p) \left[ \sum_{d=0}^k d \left(
                          \begin{array}{c}
                            k \\
                            d \\
                          \end{array}
                        \right) p^d (1-p)^{k-d} \right] \, d \, p \\
&=&  E[p] \,,
\end{eqnarray*}
since the bracketed sum is simply the mean $kp$ of a binomial random variable.
Clearly
\begin{itemize}
\item $E[p]=0.5 $ for the Jeffreys prior (since it corresponds to a uniform mixing
distribution over $p$).
\item $E[p]=1/3 $ for the Beta(1,2) prior.
\item $E[p]=\frac{1}{4}\frac{k+1}{k} $ for the Half-$p$ prior.
\item $E[p]=1-\frac{\pi}{4} \approx 0.22$ for the Hierarchical beta prior, the
computation being done by Mathmatica.
\end{itemize}

\medskip
\noindent
{\bf Harmonic prior:} For the harmonic prior,
$$ E^\mathfrak{M}(d) = \frac{1}{Harmonic(k+1)}\sum_{d=0}^k \frac{d}{d+1}
=\frac{k+1-Harmonic(k+1)}{Harmonic(k+1)}= \frac{k+1}{Harmonic(k+1)}-1 \,.
$$
Hence the prior inclusion probability is
\begin{eqnarray*} Pr(x_j)&=& \frac{E^\mathfrak{M}(d)}{k}= \frac{1+k^{-1}}{Harmonic(k+1)}-\frac{1}{k} 
\approx \frac{1+k^{-1}}{\log (k+1) +0.5772+\frac{1}{2(k+1)}}-\frac{1}{k}  
 \,,
\end{eqnarray*}
following from the standard approximation to the harmonic function.

\medskip \noindent
{\bf CMG prior:}
For the CMG prior, the density of $d$ is
$$ \mathfrak{M}(d) \propto \int_0^{\infty} Po(d \mid \lambda) \pi^I(\lambda) d \lambda, \quad d=0, \cdots k \,.
$$
For large $k$, the normalization 
$$  \sum_{d=0}^k \int_0^{\infty} Po(d \mid \lambda) \pi^I(\lambda) d \lambda
= \int_0^{\infty} \left[\sum_{d=0}^k Po(d \mid \lambda)\right] \pi^I(\lambda) d \lambda
\approx \int_0^{\infty} [1] \pi^I(\lambda) d \lambda =1 \,.
$$
Similarly (ignoring the normalization) and for large $k$,
$$
E(d) = \sum_{d=0}^k \int_0^{\infty} d \, Po(d \mid \lambda) \pi^I(\lambda) d \lambda
= \int_0^{\infty} \left[\sum_{d=0}^k d \, Po(d \mid \lambda)\right] \pi^I(\lambda) d \lambda
\approx \int_0^{\infty} [\lambda] \pi^I(\lambda) d \lambda =\frac{3}{2} \,,
$$
the last integration coming from Mathematica. So the prior inclusion probability
is approximated for large $k$ by $\frac{3}{2k}$, although we truncate at 1.


\bibliography{merged.bib}

@article{ViLee20,
	Author = {Villa, C. and Lee, J.E.},
	Journal = {Bayesian Analysis},
	Number = {2},
	Pages = {533-558},
	Title = {A Loss-Based Prior for Variable Selection in Linear Regression Methods},
	Volume = {15},
	Year = {2020}}

@techreport{WoTay-RoFue25,
	Author = {Womack, A.J. and {Taylor-Rodriguez}, D. and Fuentes, C.},
	Institution = {arXiv},
	Title = {The matryoshka doll prior -- principled multiplicity correction in Bayesian model comparison},
	Year = {2025}}

@manual{leaps-package,
	Author = {Thomas Lumley},
	Note = {R package version 3.2},
	Title = {leaps: Regression Subset Selection},
	Url = {https://CRAN.R-project.org/package=leaps},
	Year = {2024}}

@article{Furnival1974,
	Author = {Furnival, George M. and Wilson, Robert W.},
	Doi = {10.1080/00401706.1974.10489137},
	Journal = {Technometrics},
	Number = {4},
	Pages = {499--511},
	Publisher = {Taylor & Francis},
	Title = {Regressions by Leaps and Bounds},
	Volume = {16},
	Year = {1974}}

@article{Cas:Mo:Gi:14,
	Author = {Casella, G. and Moreno, E. and Gir{\'o}n, F.J.},
	Journal = {Bayesian Analysis},
	Number = {3},
	Pages = {613-658},
	Title = {Cluster Analysis, model selection and prior distributions on models},
	Volume = {9},
	Year = {2014}}

@inbook{bergetal22,
	Author = {Berger, J. and Garc{\'\i}a-Donato, G. and Moreno, E. and Pericchi, L.},
	Chapter = {Objective Bayesian Testing and Model Uncertainty.},
	Editor = {James Berger, Xiao-Li Meng, Nancy Reid, Min-ge Xie},
	Publisher = {Chapman and Hall CRC},
	Title = {Handbook of Bayesian, Fiducial, and Frequentist Inference},
	Year = {2024}}

@incollection{GarDonSteel21,
	Author = {Garc{\'\i}a-Donato, G and Steel, M.F.J.},
	Booktitle = {Handbook of {B}ayesian {V}ariable {S}election},
	Chapter = {15},
	Editor = {Mahlet G. Tadesse and Marina Vanucci},
	Publisher = {Chapman and Hall/CRC},
	Title = {Bayes factors based on g-priors for variable selection},
	Year = {2021}}

@article{Berg:Peri:96a,
	Author = {Berger, J. and Pericchi, L.},
	Journal = {Journal of the American Statistical Association},
	Pages = {109-122},
	Title = {The Intrinsic {B}ayes Factor for Model Selection and Prediction},
	Volume = {91},
	Year = {1996}}

@article{Clyd:Desi:Parmi:96,
	Author = {Clyde, M. and DeSimone, H. and Parmigiani, G.},
	Journal = {Journal of the American Statistical Association},
	Pages = {1197-1208},
	Title = {Prediction via orthogonalized model mixing},
	Volume = {91},
	Year = {1996}}

@article{Geor:Mccu:93,
	Author = {George, E. I. and McCulloch, R. E.},
	Journal = {Journal of the American Statistical Society},
	Pages = {881-889},
	Title = {Variable selection via {G}ibbs sampling},
	Volume = {88},
	Year = {1993}}

@article{Geor:Mccu:97,
	Author = {George, E. I. and McCulloch, R.},
	Journal = {Statistica Sinica},
	Pages = {339-374},
	Title = {Approaches for {B}ayesian Variable Selection},
	Volume = {7},
	Year = {1997}}

@book{Jeff:61,
	Address = {London},
	Author = {Jeffreys, H.},
	Publisher = {Oxford University Press},
	Title = {Theory of Probability},
	Year = {1961}}

@article{Ley:Steel:09,
	Author = {Ley, E. and Steel, M. F. J.},
	Journal = {Journal of Applied Econometrics (Wiley InterScience, Online)},
	Number = {1002},
	Pages = {651-674},
	Title = {On the effect of prior assumptions in {B}ayesian model averaging with applications to growth regression},
	Volume = {24},
	Year = {2009}}

@article{Raft:Madi:Hoet:97,
	Author = {Raftery, A. E. and Madigan, D. and Hoeting, J. A.},
	Journal = {Journal of the American Statistical Association},
	Pages = {179-191},
	Title = {Bayesian model averaging for regression models},
	Volume = {92},
	Year = {1997}}

@article{Scott:Berg:05,
	Author = {Scott, J. and Berger, J.},
	Journal = {Journal of Statistical Planning and Inference},
	Pages = {2144-2162},
	Title = {An exploration of aspects of {B}ayesian multiple testing},
	Volume = {136},
	Year = {2005}}

@article{Scott:Berg:10,
	Author = {Scott, J. and Berger, J.},
	Journal = {Annals of Statistics},
	Pages = {2587-2619},
	Title = {Bayes and Empirical-{B}ayes multiplicity adjustment in the variable-selection problem},
	Volume = {38},
	Year = {2010}}

@article{Smit:Kohn:96,
	Author = {Smith, M. and Kohn, R.},
	Journal = {Journal of Econometrics},
	Pages = {317-344},
	Title = {Nonparametric regression using {B}ayesian variable selection},
	Volume = {75},
	Year = {1996}}

@article{Wils:Iver:Clyd:Etal:10,
	Author = {M. A. Wilson and E. S. Iversen and M. A. Clyde and S. C. Schmidler and J. M. Schildkraut},
	Journal = {Annals of Applied Statistics},
	Pages = {1342-1364},
	Title = {Bayesian model search and multilevel inference for {SNP} association studies},
	Volume = {4},
	Year = {2010}}

@article{Gar-DonFor18,
	Author = {Garc{\'\i}a-Donato, G. and Forte, A.},
	Journal = {{The R Journal}},
	Number = {1},
	Pages = {155-174},
	Title = {{B}ayesian Testing, Variable Selection and Model Averaging in Linear Models using {R} with {BayesVarSel}},
	Volume = {10},
	Year = {2018}}

@article{Castillo_etal_15,
	Author = {Castillo, Ismael and Schmidt-Hieber, Johannes and van der Vaart, Aad},
	Journal = {Annals of Statistics},
	Pages = {1986--2018},
	Title = {Bayesian Linear Regression with Sparse Priors},
	Volume = {43},
	Year = {2015}}

@article{Ga-DoMa-Be13,
	Author = {Garcia-Donato, Gonzalo and Martinez-Beneito, Miguel A.},
	Journal = {Journal of the American Statistical Association},
	Number = {501},
	Pages = {340--352},
	Title = {On {S}ampling strategies in {B}ayesian variable selection problems with large model spaces},
	Volume = {108},
	Year = {2013}}

@article{Zuetal11,
	Author = {Zurriaga, Oscar and Perez-Panades, Jordi and Izquiero, Joan and Gil, Milagros and Anes, Yolanda and Qui{\~n}ones, Carmen and Margolles, Mario and Lopez-Maside, Aurora and Vega-Alonso, A. Tomas and Miralles, Maria T.},
	Journal = {Public Health Nutrition},
	Number = {6},
	Pages = {1105-113},
	Title = {Factors associated with childhood obesity in {S}pain. The {OBICE} study: a case--control study based on sentinel networks},
	Volume = {14},
	Year = {2011}}
\bibliographystyle{plain}

\end{document}